\documentclass[letterpaper, 10 pt, conference]{ieeeconf}

\IEEEoverridecommandlockouts                              
                                                   
\overrideIEEEmargins                                      

\usepackage{cite}
\usepackage{amsmath}
\usepackage{amssymb}  
\usepackage{accents}
\usepackage{mathtools}
\DeclareMathAlphabet{\mathpzc}{OT1}{pzc}{m}{it}
\usepackage{algorithmic}
\usepackage{algorithm}
\usepackage{array}
\usepackage{caption}
\usepackage{subcaption}
\usepackage{xcolor}
\usepackage{afterpage}
\usepackage{hyperref}
\usepackage{cleveref}
\usepackage{comment}
\usepackage{doi}

\newtheorem{theorem}{Theorem}[section]  
\newtheorem{lemma}[theorem]{Lemma}          
\newtheorem{proposition}[theorem]{Proposition}
\newtheorem{corollary}[theorem]{Corollary}
\newcommand{\norm}[1]{\left\lVert#1\right\rVert}

\begin{document}
%
\title{\LARGE \bf Unifying Sequential Quadratic Programming \\ and Linear-Parameter-Varying Algorithms \\ for Real-Time Model Predictive Control}
%
%
%

\author{Kristóf~Floch, Amon~Lahr, Roland~Tóth, and Melanie~N.~Zeilinger 
\thanks{K. Floch, A. Lahr and M. Zeilinger are with the Inst.\ for Dynamic Systems and Control, ETH Zürich, Zürich, Switzerland  
{\tt\scriptsize\{kfloch,amlahr,mzeilinger\}@ethz.ch}. R. Tóth is with the Control Systems Group, Eindhoven University of Technology, Eindhoven, The Netherlands and the Systems and Control Lab, HUN-REN Inst.\ for Computer Science and Control, Budapest, Hungary {\tt\scriptsize r.toth@tue.nl}.}%
}

\maketitle

\begin{abstract}
This paper presents a unified framework that connects \emph{sequential quadratic programming} (SQP) and the iterative \emph{linear-parameter-varying model predictive control} \mbox{(LPV-MPC)} technique. 
Using the differential formulation of the LPV-MPC, we demonstrate how SQP and LPV-MPC can be unified through a specific choice of scheduling variable and the $2^\mathrm{nd}$ \emph{Fundamental Theorem of Calculus} (FTC) embedding technique and compare their convergence properties. This enables the unification of the zero-order approach of SQP with the \mbox{LPV-MPC} scheduling technique to enhance the computational efficiency of robust and stochastic MPC problems. To demonstrate our findings, we compare the two schemes in a simulation example. Finally, we present real-time feasibility and performance of the zero-order LPV-MPC approach by applying it to \emph{Gaussian process} (GP)-based MPC for autonomous racing with real-world experiments.
\end{abstract}

\section{Introduction}
\label{sec:introduction}

Supported by advances in numerical optimization, \emph{model predictive control} (MPC) has become a key technique for the safety-critical control of dynamical systems due to its ability to handle constraints and its predictive capabilities~\cite{Rawlings_Mayne_Diehl_2017}. 
As most real-world processes exhibit nonlinear behavior, \emph{nonlinear}~MPC~(NMPC) has received increasing attention in recent years~\cite{Andersson_Gillis_Horn_Rawlings_Diehl_2019}. To compute the NMPC input, a \emph{nonlinear program}~(NLP) needs to be solved at every sampling time. For this, two prevalent techniques are \emph{interior point} methods and \emph{sequential quadratic programming}~(SQP)~\cite{Nocedal_Wright_2006}. For real-time NMPC algorithms, particularly SQP methods---iteratively approximating the NLP by a sequence of \emph{quadratic programs}---have gained significant attention due to advances in efficient quadratic programming solvers~\cite{Kouzoupis_Frison_Zanelli_Diehl_2018} and \emph{real-time iteration} schemes~(RTI)~\cite{Diehl_2001}.


Closely related to the SQP algorithm,~\cite{Gonzalez_Cisneros_2021} proposes a quasi-linear MPC framework that embeds a nonlinear system into a \emph{linear parameter-varying}~(LPV) form, allowing the NMPC problem to be solved by successive solution of linear MPC problems, corresponding to QPs. 
In subsequent sections, this method will be referred to as LPV-MPC. Viewing both LPV-MPC and SQP through the lens of inexact Newton-type methods~\cite{Bock_Diehl_Kostina_Schloder}, it can be demonstrated that the two methods have similar convergence properties if the LPV model is obtained by linearization~\cite{Hespe_Werner_2021}. More recently,~\cite{Hoekstra_Cseppento_Beintema_Schoukens_Kollar_Toth_2023} utilizes an automatic FTC-based embedding technique~\cite{Olucha_Koelewijn_Das_Tóth_2025}, to achieve a global LPV representation without approximation. However, the relation of this FTC-based iterative LPV-MPC scheme to SQP has not been thoroughly investigated, and it has yet to be deployed in real hardware experiments.

To further enhance the real-time capabilities of SQP,~\cite{Feng_Cairano_Quirynen_2020,Zanelli_Frey_Messerer_Diehl_2021} propose a zero-order scheme, where a subset of the states of the representation of the system is decoupled from the \emph{optimal control problem} (OCP) through a  tailored Jacobian approximation. This approach was shown to be particularly beneficial for robust~\cite{Zanelli_Frey_Messerer_Diehl_2021}, stochastic~\cite{Feng_Cairano_Quirynen_2020}  and GP-based~\cite{Lahr_Zanelli_Carron_Zeilinger_2023} NMPC \mbox{(GP-MPC)} schemes, where decoupling the uncertainty propagation from the OCP leads to the elimination of the quadratic scaling of the number of optimization variables on the states. Similarly, in the LPV literature, scheduling variables have been utilized to eliminate state variables from the OCP for GP-MPC~\cite{Polcz_Peni_Toth_2023}; however, the connection between these approaches has not yet been explored in the literature.

This paper aims to unify SQP and FTC-based LPV approaches for MPC, yielding the following contributions:
\begin{itemize}
    \item[C1] We introduce a unified solution method for NMPC problems for which the SQP and the FTC-based \mbox{LPV-MPC} schemes are sub-cases. In particular, we show that the FTC embedding approach for LPV-MPC recovers SQP under a specific choice of so-called anchor points.
    \item[C2] We show that the zero-order approximation of the Jacobians can be integrated into the unified scheme and how it can be viewed as using an extended scheduling variable in the LPV-MPC variant.
    \item[C3] We compare computational complexity and the convergence properties of SQP and LPV-MPC in simulation.
    \item[C4] We apply the unified zero-order scheme for the learning-based control of autonomous race cars. Specifically, we implement  both the SQP and LPV-based version of the zero-order GP-MPC algorithm~\cite{Lahr_Zanelli_Carron_Zeilinger_2023} in \textsc{L4acados}~\cite{Lahr_Naf_Wabersich_Frey_Siehl_Carron_Diehl_Zeilinger_2024} to solve the \emph{model predictive contouring control}~(MPCC) problem in simulation and real-world experiments.
\end{itemize}

The remainder of this paper is structured as follows. Sec.~\ref{sec:NMPC} reviews NMPC, introducing both SQP and iterative LPV-MPC as the basis for this paper. Sec.~\ref{sec:unification_sqp_lpv} develops a unified framework that combines the two approaches through a differential formulation and unifies the zero-order method. 
Then, Sec.~\ref{sec:simulation_study} presents a simulation study, highlighting convergence behavior and computational complexity as well as the applicability of the method for autonomous racing. Finally, Sec.~\ref{sec:experiment} presents the experimental validation.


\section{Nonlinear MPC}
\label{sec:NMPC}
\subsection{General NMPC Problem}
We consider a general \emph{discrete-time}~(DT) \emph{nonlinear}~(NL) system of the form
\begin{equation}
\label{eqn:NL_sys}
    x[k+1]=f(x[k], u[k]),
\end{equation}
where \mbox{$x[k]\in\mathbb{R}^{n_\mathrm{x}}$} is the state vector, \mbox{$u[k]\in\mathbb{R}^{n_\mathrm{u}}$} is the input vector, and $k \in \mathbb{N}$ denotes the discrete time step. The DT state evolution is defined by \mbox{$f: \mathbb{R}^{n_\mathrm{x}}\times \mathbb{R}^{n_\mathrm{u}}\rightarrow \mathbb{R}^{n_\mathrm{x}}$}.

For simplicity, as a control objective, we focus on stabilizing the system around an equilibrium point, assumed w.l.o.g.\ to be at the origin. However, we note that the extension to tracking tasks is straightforward, see~\cite{Rawlings_Mayne_Diehl_2017}.  Furthermore, we prescribe state and input constraints as
\mbox{$h(x[k],u[k])\leq 0,\; h_N(x[k])\leq 0$},
where \mbox{$h: \mathbb{R}^{n_\mathrm{x}} \times \mathbb{R}^{n_\mathrm{u}} \rightarrow\mathbb{R}^{n_\mathrm{h}}$} and \mbox{$h_N: \mathbb{R}^{n_\mathrm{x}} \rightarrow\mathbb{R}^{n_\mathrm{h_N}}$}.

Generally, in NMPC, given a current state measurement $x[k]$ at time step $k$, an NLP is solved to obtain a sequence of optimal state and input values over a finite prediction horizon. The NLP can be formulated as follows:
\begin{subequations}
\label{eqn:NLP}
\begin{align}
\min_{X,U} \quad&\sum_{i=0}^{N-1}l_i(x_i,u_i)+l_N(x_N)\label{eqn:NL_MPC_cost}\\
\textrm{s.t.}&\quad \forall i\in\mathbb{I}_0^{N-1}\\
&\quad x_{i+1}=f(x_i,u_i),\label{eqn:state_progagation}\\
             &\quad h(x_i,u_i)\leq0, \label{eqn:nlp_constraints}\\
             &\quad h_N(x_N)\leq0, \label{eqn:nlp_terminal_constraints}\\
             &\quad x_0=x[k],
\end{align}
\end{subequations}
where \mbox{$l_i:\mathbb{R}^{n_\mathrm{x}}\times \mathbb{R}^{n_\mathrm{u}}\rightarrow \mathbb{R}$} denotes the stage cost, \mbox{$l_N:\mathbb{R}^{n_\mathrm{x}}\rightarrow \mathbb{R}$} is the terminal cost, $N$ is the control horizon, \mbox{$U=[u_0^\top\dots u_{N-1}^\top]^\top$} and \mbox{$X=[x_0^\top\dots x_N^\top]^\top$} are the optimization variables and \mbox{$\mathbb{I}_{\tau_1}^{\tau_2}=\{i\in\mathbb{Z}\;|\;\tau_1\leq i \leq \tau_2 \}$}. For simplicity, in this paper we consider quadratic stage and terminal costs defined as
    \begin{equation}
    \label{eqn:quadratic_stage_and_terminal_cost}
        l_i(x_i,u_i)  \doteq \norm{x_i}_Q^2 + \norm{u_i}_R^2, \quad
        l_N(x_N) \doteq \norm{x_N}_W^2,
    \end{equation}
where $Q\succeq0$, $R\succ 0$, and $W\succeq0$ are the positive \mbox{(semi-)}definite weighting matrices and \mbox{$\norm{a}_B^2\doteq a^\top B a$}. 
In the following, we assume that all functions in the NLP \eqref{eqn:NLP} are at least twice continuously differentiable. 

\subsection{SQP Solution}
\label{sec:SQP}
The main idea of the SQP-based solution is that, given an initial guess of \mbox{$\hat{X}=[\hat{x}_0^\top\dots\hat{x}_N^\top]^\top,\;\hat{U}=[\hat{u}_0^\top\dots\hat{u}_{N-1}^\top]^\top$}, the original problem \eqref{eqn:NLP} is approximated by a single QP, which provides a reliable approximation in a local neighborhood around the linearization points, $\hat{X}, \hat{U}$. By defining the optimization variables as \mbox{$\Delta x_i=x_i-\hat{x}_i,\; \Delta u_i=u_i-\hat{u}_i$}, the QP yields $\Delta X^\star$,$\Delta U^\star$. Then, the current approximation is updated as \mbox{$\hat{X}\leftarrow \hat{X}+\Delta X^\star$}, \mbox{$\hat{U}\leftarrow \hat{U}+\Delta U^\star$} to obtain a sequence of solutions that is, under certain conditions, proven to converge to a Karush-Kuhn-Tucker (KKT) point of \eqref{eqn:NLP}, denoted by $X^\star, U^\star$, cf.~\cite[Thm.~1]{Boggs_Tolle_1995}. 
The quadratic subproblem solved at each SQP iteration can be defined as
\begin{subequations}
\label{eqn:QP_of_SQP}
    \begin{align}
    \min_{\Delta X, \Delta U} &\quad \sum_{i=0}^{N-1}
    \frac{1}{2}
     \begin{bmatrix}\Delta x_i \\ \Delta u_i \end{bmatrix}^\top \mathpzc{M}_i \begin{bmatrix}
        \Delta x_i\\
        \Delta u_i
    \end{bmatrix} + m_i^\top (
        \hat{x}_i,\hat{u}_i)\begin{bmatrix}
        \Delta x_i\\ \Delta u_i
    \end{bmatrix}\nonumber \\
    & + \frac{1}{2}\Delta x_N^\top \mathpzc{M}_N \Delta x_N + m_N^\top({\hat{x}_N}) \Delta x_N \\
        \textrm{s.t.} & \quad \forall i\in\mathbb{I}_0^{N-1}\\
        &\quad \Delta x_{i+1}=A_i\Delta x_i+B_i\Delta u_i + \Delta\mathpzc{W}_i^\mathrm{x},\;  \\
                   & \quad 0\geq H^\mathrm{x}_i\Delta x_i+H_i^\mathrm{u}\Delta u_i+ \Delta\mathpzc{W}_i^\mathrm{h},\\
                   & \quad 0\geq H^\mathrm{x}_N\Delta  x_N+\Delta \mathpzc{W}_N^\mathrm{h},\\
                  & \quad \Delta x_0=0.
    \end{align}
\end{subequations}
In the cost of \eqref{eqn:QP_of_SQP}, $\mathpzc{M}_i$ is the chosen approximation of the Hessian of the Lagrangian and $m_i$ is the Jacobian of the original cost at step $i$, which for a quadratic cost evaluates to $m_i = {M}_i[x_i^\top\; u_i^\top]^\top$. The derivation of $\mathpzc{M}_i$ is included in Appendix~\ref{app:hessian}.
The state and input matrices for the linearized dynamics and constraints are the Jacobians of \eqref{eqn:state_progagation} and \eqref{eqn:nlp_constraints} \eqref{eqn:nlp_terminal_constraints}, respectively, evaluated at $\hat{x}, \hat{u}$, i.e.,
    \begin{align}
    \label{eqn:SQP_AB_jacobian}
        A_i = \left. \frac{\partial f} {\partial x} \right | _ {\substack{\hat{x}_i \\ \hat{u}_i}}, \, 
        B_i = \left. \frac{\partial f} {\partial u} \right | _ {\substack{\hat{x}_i \\ \hat{u}_i}}, \,
        H_i^\mathrm{x}= \left. \frac{\partial h} {\partial x} \right | _ {\substack{\hat{x}_i \\ \hat{u}_i}}, \,
        H_i^\mathrm{u}= \left. \frac{\partial h} {\partial u} \right | _ {\substack{\hat{x}_i \\ \hat{u}_i}},
    \end{align}
and the residual terms are 
\begin{subequations}
\begin{align}
    \Delta\mathpzc{W}_i^\mathrm{x}&={f}(\hat{x}_i, \hat{u}_i)-\hat{x}_{i+1},\\\Delta\mathpzc{W}^\mathrm{h}_i&={h}(\hat{x}_i, \hat{u}_i),\; i\in\mathbb{I}_0^{N-1},\;
    \Delta\mathpzc{W}^\mathrm{h}_N={h}(\hat{x}_N).
\end{align}
\end{subequations}
Using \eqref{eqn:QP_of_SQP}, the standard SQP algorithm is summarized in Alg.~\ref{alg:SQP-MPC}. 
In particular, if the solution of the quadratic subproblem yields a vanishing step 
\mbox{$\Delta X = 0$}, \mbox{$\Delta U =0$}, then, according to Lemma~\ref{lem:sqp-conv}, the current iterate satisfies the KKT conditions 
of the original nonlinear program.

\begin{algorithm}
    \caption{SQP-based MPC.}
    \label{alg:SQP-MPC}
    \begin{algorithmic}[1]
        \STATE \textbf{input} $x[k]$ (measured state at time~$k$), ${X}^\star, {U}^\star$ (optimal state/input sequence at time~$k-1$)
        \STATE \textbf{initialize} $\hat{X}=X^\star, \hat{U}=U^\star$
        \REPEAT
            \STATE \textbf{set} $A_i, B_i, i \in \mathbb{I}_{i=0}^{N-1}$ via \eqref{eqn:SQP_AB_jacobian}
            \STATE \textbf{solve} \eqref{eqn:QP_of_SQP} to obtain $\Delta X^\star, \Delta U^\star$
            \STATE \textbf{update} $\hat{X} := \hat{X} + \Delta X^\star, \; \hat{U} := \hat{U} + \Delta U^\star$
         \UNTIL{convergence criterion is reached
        \STATE \textbf{set} $X^\star=\hat{X},\; U^\star = \hat{U}$}
        \STATE \textbf{apply} $u[k]=[U^\star]_0$
    \end{algorithmic}
\end{algorithm}

\begin{lemma}[Optimality of SQP \protect{\cite[Chap.~18]{Nocedal_Wright_2006}}]
\label{lem:sqp-conv}
Consider the  NMPC problem \eqref{eqn:NLP}, solved by SQP using Alg.~\ref{alg:SQP-MPC}. Suppose that standard SQP assumptions hold~\cite[Sec.~3.1]{Boggs_Tolle_1995} and the algorithm has converged, i.e., \mbox{$\Delta X^\star=0,\;\Delta U^\star=0$}. Then, \({X}^\star, {U}^\star\) together with the corresponding Lagrange multipliers
satisfy the KKT first‑order optimality conditions of the original NLP~\eqref{eqn:NLP},
in particular, \({X}^\star, {U}^\star\) is a first‑order stationary point of the NLP, i.e., a locally optimal solution.
\end{lemma}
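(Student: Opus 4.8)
The plan is to compare, term by term, the first-order (KKT) optimality system of the quadratic subproblem \eqref{eqn:QP_of_SQP} with that of the original NLP \eqref{eqn:NLP}, and to show that at a converged iterate the two systems coincide. The starting observation is that, since \eqref{eqn:QP_of_SQP} is a (convex) QP for which a constraint qualification holds, the fact that $\Delta X^\star=0,\Delta U^\star=0$ is its solution is equivalent to the QP's KKT conditions being satisfied at $\Delta=0$, with some multipliers $(\lambda,\mu)$ for the linearized dynamics equalities and the linearized inequalities (stage and terminal), respectively.

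First I would form the Lagrangian of \eqref{eqn:QP_of_SQP} and write its stationarity conditions with respect to $\Delta x_i$ and $\Delta u_i$. These split into three contributions: (i) the quadratic Hessian-approximation terms $\mathpzc{M}_i[\Delta x_i^\top\ \Delta u_i^\top]^\top$, (ii) the linear cost terms $m_i(\hat{x}_i,\hat{u}_i)$, and (iii) the transposed constraint Jacobians $A_i,B_i,H_i^\mathrm{x},H_i^\mathrm{u}$ acting on the multipliers. Substituting the converged iterate $\Delta=0$, contribution (i) vanishes identically because it is linear in $\Delta$ — which is precisely why the particular choice of $\mathpzc{M}_i$ is irrelevant to the first-order conclusion. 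What remains is $m_i(\hat{x}_i,\hat{u}_i)$ together with the Jacobian–multiplier terms; since by \eqref{eqn:SQP_AB_jacobian} the matrices $A_i,B_i,H_i^\mathrm{x},H_i^\mathrm{u}$ are the Jacobians of $f$ and $h$ at $\hat{x}_i=x_i^\star,\hat{u}_i=u_i^\star$, and $m_i$ is the gradient of the original stage/terminal cost there, this is verbatim the stationarity equation of the NLP \eqref{eqn:NLP} at $(X^\star,U^\star)$ with the same $(\lambda,\mu)$.

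Next I would check primal feasibility and complementarity. Evaluating the QP equality $\Delta x_{i+1}=A_i\Delta x_i+B_i\Delta u_i+\Delta\mathpzc{W}_i^\mathrm{x}$ at $\Delta=0$ gives $0=\Delta\mathpzc{W}_i^\mathrm{x}=f(\hat{x}_i,\hat{u}_i)-\hat{x}_{i+1}$, i.e.\ $x_{i+1}^\star=f(x_i^\star,u_i^\star)$, recovering \eqref{eqn:state_progagation}; likewise the linearized inequalities collapse to $0\geq\Delta\mathpzc{W}_i^\mathrm{h}=h(\hat{x}_i,\hat{u}_i)$ and $0\geq\Delta\mathpzc{W}_N^\mathrm{h}=h_N(\hat{x}_N)$, recovering \eqref{eqn:nlp_constraints}–\eqref{eqn:nlp_terminal_constraints}, while $\Delta x_0=0$ with $\hat{x}_0=x[k]$ gives the initial condition. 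Dual feasibility $\mu_i\geq 0$ and the QP's complementary slackness transfer unchanged to $\mu_i^\top h(x_i^\star,u_i^\star)=0$. Assembling these, $(X^\star,U^\star,\lambda,\mu)$ satisfies every block of the KKT system of \eqref{eqn:NLP}, so $(X^\star,U^\star)$ is a first-order stationary point.

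I expect the only genuine subtlety — rather than a true obstacle — to be the invocation of a constraint qualification: the QP at the converged iterate must admit multipliers certifying optimality of $\Delta=0$, which is guaranteed because the standard SQP assumptions \cite[Sec.~3.1]{Boggs_Tolle_1995} include a constraint qualification (e.g.\ LICQ) at the solution, and the linearized constraints inherit exactly those Jacobians, hence the same qualification. Once the multipliers exist, the argument is purely algebraic matching, as in \cite[Chap.~18]{Nocedal_Wright_2006}.
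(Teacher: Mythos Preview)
Your argument is correct and is the standard textbook derivation: write the KKT system of the QP subproblem, evaluate it at $\Delta=0$ so the Hessian-approximation terms drop out, and observe that the remaining stationarity, primal/dual feasibility, and complementarity conditions are literally those of the NLP because the QP uses the exact Jacobians of $f,h$ and the exact cost gradient at the current iterate.

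Note, however, that the paper does not supply its own proof of this lemma: it is stated as a citation of \cite[Chap.~18]{Nocedal_Wright_2006} and immediately followed by a remark on KKT residuals as termination criteria. So there is nothing in the paper to compare against beyond the reference itself; your write-up is essentially a condensed version of the argument one finds in that chapter (e.g., the equivalence between QP-KKT at a zero step and NLP-KKT). The only point worth tightening is the final clause of the lemma: satisfying first-order KKT conditions yields a stationary point, but ``locally optimal'' additionally requires a second-order condition, which neither your argument nor a zero-step alone establishes --- this is a looseness in the lemma's wording rather than in your proof.
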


During optimization, KKT residuals of the original NLP~\eqref{eqn:NLP} are commonly used as a convergence criterion for SQP implementations~\cite[Eq.~(12.34)]{Nocedal_Wright_2006}.

\subsection{Iterative LPV-MPC}
\label{sec:LPV-MPC}
An alternative approach to solve \eqref{eqn:NLP} is to utilize an LPV embedding in an iterative LPV-MPC scheme~\cite{Gonzalez_Cisneros_2021}.
To discuss this approach, first, the LPV embedding of NL systems is outlined using the FTC, based on~\cite{Olucha_Koelewijn_Das_Tóth_2025}. Then, the iterative solution of the LPV-MPC problem is presented.

\subsubsection{LPV systems}
\label{sec:LPV_systems_embedding}

A wide class of nonlinear functions can be represented as
\mbox{$f(x, u) = A(\Phi(x,u)) x + B(\Phi(x,u)) u + V$},    
where
the matrices
\mbox{$A: \mathbb{R}^{n_\rho}\rightarrow \mathbb{R}^{n_\mathrm{x}\times n_\mathrm{x}}$}, \mbox{$B: \mathbb{R}^{n_\rho}\rightarrow \mathbb{R}^{n_\mathrm{x}\times n_\mathrm{u}}$}
depend on the states and inputs via the so-called scheduling map \mbox{$\Phi: \mathbb{R}^{n_\mathrm{x}}\times \mathbb{R}^{n_\mathrm{u}} \rightarrow \mathbb{R}^{n_\rho}$}, and \mbox{$V \in \mathbb{R}^{n_\mathrm{x}}
$} is either a constant offset vector or it can also be dependent on $\Phi$. By defining the \emph{scheduling variable} as $\rho[k] \doteq \Phi(x[k],u[k])$ the LPV representation~\cite{Abbas_Toth_2014} of \eqref{eqn:NL_sys} is
\begin{equation}
\label{eqn:LPV_sys}
    x[k+1] = A(\rho[k])x[k] + B(\rho[k])u[k] + V[k],
\end{equation}
where the dependence of $\rho$ on $x$ and $u$ is intentionally neglected to obtain an embedding of the NL system, enabling convex analysis and synthesis, or is treated as a known or uncertain sequence in predictive control. There exists a wide variety of methods to accomplish the factorization of the nonlinearity to obtain the LPV representation \eqref{eqn:LPV_sys}. Many of these methods are only applicable to specific model structures, are computationally demanding, or require expert decisions in the modeling process, cf.~\cite{Olucha_Koelewijn_Das_Tóth_2025}. To establish the connection between SQP and LPV methods, we focus on the FTC-based formulation proposed by~\cite{Olucha_Koelewijn_Das_Tóth_2025}, which automatically embeds the exact nonlinear dynamics into an LPV representation without requiring manual design choices.

\allowdisplaybreaks

Given a continuously differentiable function \mbox{$a: \mathbb{R}^n\rightarrow\mathbb{R}^m$}, 
the FTC states that, for $\eta, \tilde{\eta}\in\mathbb{R}^{n}$,
\begin{equation}
\label{eqn:FTC}
    a(\eta)-a(\tilde{\eta})=\left(\int_0^{1}\left.\frac{\mathrm{d}a}{\mathrm{d}\eta}\right|_{\tilde{\eta} + \lambda (\eta-\tilde{\eta})}  \mathrm{d}\lambda\right)(\eta-\tilde{\eta}),
\end{equation}
where $\left.\frac{\mathrm{d}a}{\mathrm{d}\eta}\right|_{\lambda \eta}$ is the Jacobian of $a$ evaluated at $\lambda\eta$. Since $f$ is differentiable, by choosing \mbox{$\eta_i = [x_i^\top\; u_i^\top]^\top$} and \mbox{$\tilde{\eta} = [\tilde{x}_i^\top\; \tilde{u}_i^\top]^\top$}, we obtain
\begin{multline}
\label{eqn:FTC_to_LPV}
    f(x_i,u_i)=  \underbrace{\int_0^1 \left. \frac{\partial f}{\partial x}\right|_{\substack{\tilde{x}_i+\lambda (x_i-\tilde{x}_i) \\ \tilde{u}_i+ \lambda (u_i-\tilde{u}_i)}}\mathrm{d}\lambda}_{\bar{A}(x_i,u_i)}(x_i-\tilde{x}_i)\\ + \underbrace{\int_0^1\left.\frac{\partial f}{\partial u}\right|_{\substack{\tilde{x}_i + \lambda (x_i-\tilde{x}_i)\\\tilde{u}_i + \lambda (u_i-\tilde{u}_i)}}\mathrm{d}\lambda}_{\bar{B}(x_i,u_i)}(u_i-\tilde{u}_i) + V_i,
\end{multline}
where \mbox{$V_i=f(\tilde{x}_i,\tilde{u}_i)$} is a term dependent on the anchor points $\tilde{x}_i, \tilde{u}_i$. By defining the scheduling map as \mbox{$\Phi(x_i,u_i)\doteq[x_i^\top\; u_i^\top]^\top$}, the scheduling-dependent state and input matrices of the LPV form become
    \mbox{$A(\rho_i) \doteq \bar{A}(x_i,u_i), \quad B(\rho_i) \doteq   \bar{B}(x_i,u_i)$}.
Note that in most LPV applications, the anchor points $\tilde{x}_i, \tilde{u}_i$ are considered constant-zero with $0=f(0,0)$, which naturally gives the LPV form. In contrast, this paper also investigates non-zero and varying anchor points along the prediction horizon.
It is also important to highlight that, for the LPV conversion, the calculation of the integrals of the Jacobians is required. While the Jacobians can be easily computed using symbolic computation packages or algorithmic differentiation, the analytical expression of the integrals is a difficult task. However, as only the values of the $A$ and $B$ matrices are interesting in an MPC formulation at a given $\rho[k]$, we can rely on numerical integration methods, which can also be parallelized \cite[Sec.~V.F]{Hoekstra_Cseppento_Beintema_Schoukens_Kollar_Toth_2023} for efficiency. Note that it is straightforward to apply \eqref{eqn:FTC} to the nonlinear inequality constraints~\eqref{eqn:nlp_constraints}-\eqref{eqn:nlp_terminal_constraints}, yielding a similar parameter-varying formulation with $H^\mathrm{x}, H^\mathrm{u}$ defined in \Cref{TABLE:SQP_vs_LPV}.

\subsubsection{LPV-MPC}
Using the LPV model obtained in Sec.~\ref{sec:LPV_systems_embedding}, we can employ the method outlined in \cite{Gonzalez_Cisneros_2021} to solve the NMPC problem efficiently using an iterative procedure. The key idea of the LPV-MPC approach is that at any given time step $k$, for a fixed scheduling sequence \mbox{${P}=[\rho_{0}^\top\dots \rho_{N-1}^\top]^\top$}, Eq.~\eqref{eqn:LPV_sys} reduces to an affine system, for which the MPC problem can be efficiently solved via the following QP:
\begin{subequations}
\label{eqn:q_linear_MPC}
\begin{align}
    \min_{X,U}& \;\, \sum_{i=0}^{N-1}(\norm{x_i}_Q^2 + \norm{u_i}_R^2)+\norm{x_N}_W^2,\\
    \textrm{s.t.} &\;\, \forall i\in\mathbb{I}_0^{N-1} \\
                     & \;\, x_{i+1}=A(\rho_i)(x_i-\tilde{x}_i)+B(\rho_i)(u_i-\tilde{u}_i) + V^\mathrm{x}_i,\; \label{eqn:q_lin_state_const} \\
                   & \;\, 0\geq H^\mathrm{x}(\rho_i)(x_i-\tilde{x}_i)+H^\mathrm{u}(\rho_i)(u_i-\tilde{u}_i)+ V^\mathrm{h}_i, \label{eqn:q_linear_const1}\\
                   & \;\, 0\geq H^\mathrm{x}(\rho_N)(x_N-\tilde{x}_N)+ V^\mathrm{h}_N, \label{eqn:q_linear_const2}\\
                  & \;\, x_0=x[k].
\end{align}
\end{subequations}
As ${P}$ is fixed in \eqref{eqn:q_linear_MPC}, state propagation reduces to LTV dynamics. Furthermore, \eqref{eqn:q_linear_const1} is the LPV factorization of the constraints~\eqref{eqn:nlp_constraints}. As a result, \eqref{eqn:q_linear_MPC} corresponds to a quadratic subproblem that can be efficiently solved by standard QP solvers \cite{Kouzoupis_Frison_Zanelli_Diehl_2018}. Solving \eqref{eqn:q_linear_MPC} yields an optimal input sequence $U^\star$, which can be used to forward-simulate the NL model~\eqref{eqn:NL_sys} to obtain the scheduling sequence at time step $k+1$, based on which a new quadratic subproblem can be formulated and solved. By executing this iteration until the input trajectory has converged, the solution of the quadratic subproblem converges to a suboptimal solution of the NMPC, assuming that the LPV approximation is sufficiently accurate, according to Lemma~\ref{lem:lpvmpc-suboptimal}.

\begin{lemma}[Suboptimality of LPV--MPC \protect{\cite[Thm.~3]{Hespe_Werner_2021}}]\label{lem:lpvmpc-suboptimal}
Consider the LPV--MPC problem \eqref{eqn:q_linear_MPC}, with a scheduling trajectory 
$\rho_i=\Phi(\hat{x}_i, \hat{u}_i)$ forming the QP approximation according to \eqref{eqn:q_linear_MPC} and  Alg.~\ref{alg:LPV-MPC}.  
Suppose the algorithm has converged, 
i.e., the solution $(X^\star,U^\star)$ of the QP coincides with the current iterate.  
Then $(X^\star,U^\star)$ is feasible and in general a suboptimal solution for the original NMPC \eqref{eqn:NLP}.
\end{lemma}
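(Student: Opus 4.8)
The plan is to establish the two assertions — feasibility first, then (sub)optimality — separately, leaning on the fact that the FTC factorization~\eqref{eqn:FTC_to_LPV} is \emph{exact} rather than a local approximation.

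For \emph{feasibility}, I would start from a converged iterate of the LPV-MPC loop (Alg.~\ref{alg:LPV-MPC}): by hypothesis the QP minimizer obeys $x_i^\star=\hat x_i$ and $u_i^\star=\hat u_i$, while the schedule is $\rho_i=\Phi(\hat x_i,\hat u_i)=[\hat x_i^\top\;\hat u_i^\top]^\top$ and the affine offset is $V_i^{\mathrm x}=f(\tilde x_i,\tilde u_i)$. Since $A(\rho_i)=\bar A(\hat x_i,\hat u_i)$ and $B(\rho_i)=\bar B(\hat x_i,\hat u_i)$ by construction, evaluating~\eqref{eqn:FTC_to_LPV} at $\eta_i=[\hat x_i^\top\;\hat u_i^\top]^\top$ collapses the right-hand side of the LTV recursion~\eqref{eqn:q_lin_state_const} to exactly $f(\hat x_i,\hat u_i)$, so $\hat x_{i+1}=f(\hat x_i,\hat u_i)$, i.e.\ \eqref{eqn:state_progagation} holds. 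Applying the analogous FTC factorization to the constraint maps turns \eqref{eqn:q_linear_const1}--\eqref{eqn:q_linear_const2} into $h(\hat x_i,\hat u_i)\le 0$ and $h_N(\hat x_N)\le 0$, and $x_0=x[k]$ transfers verbatim; hence $(X^\star,U^\star)$ satisfies every constraint of~\eqref{eqn:NLP}.

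For \emph{suboptimality}, I would first observe that the objective of \eqref{eqn:q_linear_MPC} is literally the quadratic cost~\eqref{eqn:quadratic_stage_and_terminal_cost} of \eqref{eqn:NLP} — no cost linearization is performed — so the converged point attains in \eqref{eqn:NLP} the same objective value it attains in the QP, and it is the global minimizer over the \emph{frozen-schedule} QP feasible set. The remaining step compares this set, and the associated first-order conditions, with those of \eqref{eqn:NLP}: at the converged point the QP-KKT stationarity equations involve the chord-averaged Jacobians $\bar A(\hat x_i,\hat u_i),\bar B(\hat x_i,\hat u_i),\bar H^{\mathrm x}(\hat x_i,\hat u_i),\bar H^{\mathrm u}(\hat x_i,\hat u_i)$ of \eqref{eqn:FTC_to_LPV}, whereas the KKT stationarity of \eqref{eqn:NLP} uses the pointwise Jacobians at $(\hat x_i,\hat u_i)$. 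These coincide only when the anchors satisfy $\tilde x_i=\hat x_i$ and $\tilde u_i=\hat u_i$ (zero chord length — precisely the SQP instance of the unified scheme, cf.\ Lemma~\ref{lem:sqp-conv}), so in general the QP fixed point fails the NLP stationarity conditions. I would close this part with a minimal scalar example, e.g.\ $f(x,u)=x+u+x^2$, for which the chord-averaged slope $1+\hat x_i+\tilde x_i$ used by the QP differs from the pointwise slope $1+2\hat x_i$ demanded by the NLP-KKT system whenever $\tilde x_i\neq\hat x_i$; a converged, feasible iterate then generically has a nonzero NLP Lagrangian gradient and is therefore suboptimal.

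The hard part is the suboptimality claim, as it is a negative statement: the Jacobian mismatch above only shows that optimality is \emph{not guaranteed}, and to justify the ``in general'' wording rigorously one needs an explicit instance together with a check that the LPV-MPC iteration actually fixed-points there. A secondary subtlety is the bookkeeping of the anchor-point convention — different $\tilde x_i,\tilde u_i$ change the chord-averaged Jacobians without affecting feasibility — and one must confirm that the SQP choice is the only one collapsing the QP-KKT system onto the NLP-KKT system.
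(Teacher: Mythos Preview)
The paper does not actually supply a proof of Lemma~\ref{lem:lpvmpc-suboptimal}: the result is stated with a citation to \cite[Thm.~3]{Hespe_Werner_2021} and no argument is given in the text. So there is no ``paper's own proof'' to compare against, and your proposal has to be judged on its merits.

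Your feasibility argument is correct and is precisely the right mechanism: the FTC factorization~\eqref{eqn:FTC_to_LPV} is an identity, so at a fixed point the LTV recursion~\eqref{eqn:q_lin_state_const} and the factorized inequality constraints collapse to the nonlinear ones. (A shorter route, specific to Alg.~\ref{alg:LPV-MPC}, is to note that line~\ref{alg:line:sim} forward-simulates the true dynamics, so the iterate $\hat X$ satisfies~\eqref{eqn:state_progagation} by construction; convergence then transfers this to $X^\star$.) Your suboptimality argument---comparing the chord-averaged Jacobians entering the QP-KKT system with the pointwise Jacobians entering the NLP-KKT system, and exhibiting the mismatch via $f(x,u)=x+u+x^2$---is sound and in fact anticipates the mechanism the paper uses later: Proposition~\ref{prop:sqp-lpvmpc-equivalence} is exactly the observation that the two Jacobians coincide iff $\tilde x_i=\hat x_i$, $\tilde u_i=\hat u_i$, in which case LPV-MPC reduces to SQP and Lemma~\ref{lem:sqp-conv} applies. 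Your self-identified caveat about ``in general'' is honest; the paper sidesteps it by citing the external reference rather than constructing a counterexample, and the simulation in Fig.~\ref{fig:residuals} is offered as empirical confirmation.
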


The iterative LPV-MPC algorithm is outlined in Alg.~\ref{alg:LPV-MPC}.

\begin{algorithm}
    \caption{Iterative LPV-MPC.}
    \label{alg:LPV-MPC}
    \begin{algorithmic}[1]
        \STATE \textbf{input} $x[k]$ (measured state at time $k$), ${U}^\star$ (optimal input sequence at $k-1$)
        \STATE \textbf{initialize}\footnotemark[1] $[{P}]_i=\Phi(x[k], [{U}^\star]_{i+1}),\; i\in\mathbb{I}_{i=0}^{N-1}$
        \REPEAT
            \STATE \textbf{set} $A(\rho_i), B(\rho_i), i \in \mathbb{I}_{i=0}^{N-1}$ via \eqref{eqn:FTC_to_LPV}
            \STATE \textbf{solve} \eqref{eqn:q_linear_MPC} to obtain ${U}$
            \STATE \textbf{simulate} \eqref{eqn:NL_sys} with $U^\star\leftarrow {U}$ and $x[k]$ to obtain ${X}$ \label{alg:line:sim}
            \STATE \textbf{set} $[{P}]_i=\rho_i=\Phi([{X}]_i,[{U}]_i),\; i\in\mathbb{I}_{i=0}^{N-1}$
         \UNTIL{${P}$ has converged}
        \STATE \textbf{apply} $u[k]=[U^\star]_0$
    \end{algorithmic}
\end{algorithm}
\footnotetext[1]{{For $i\geq2$, $X^\star$ can also be used to initialize $P$.}}
As convergence criterion, most LPV-MPC approaches monitor the convergence of the scheduling variables, i.e., if
\begin{equation}
\label{eqn:lpv_convergence_criteria}
    \norm{{P}-\hat{{P}}}_\infty \leq \epsilon_\mathrm{LPV},
\end{equation}
where $\hat{{P}}$ denotes the previous scheduling sequence.

\section{Unifying SQP and LPV-MPC}
\label{sec:unification_sqp_lpv}
\subsection{Equivalence Condition}
\label{sec:unified_nominal}
To show how the SQP and LPV-MPC approaches are related, we reformulate the QP of the LPV-MPC problem into a differential form akin to SQP, i.e., we use $\Delta X$ and $\Delta U$ as optimization variables, similarly to \cite{Karachalios_Abbas_2024}. First, along the trajectory $\hat{X}, \hat{U}$, the state-evolution constraint \eqref{eqn:q_lin_state_const} can be expressed as
\begin{equation}
x_{i+1} = A(\hat{\rho}_i)x_i + B(\hat{\rho}_i)u_i + \mathpzc{W}^\mathrm{x}_i, 
\end{equation}
with \mbox{$\mathpzc{W}_i=V^\mathrm{x}_i - A(\hat{\rho}_i)\tilde{x}_i - B(\hat{\rho}_i)\tilde{u}_i$}, which is completely determined by the stage-wise anchor points $\tilde{X}, \tilde{U}$ and the trajectory $\hat{X}, \hat{U}$. By 
defining \mbox{$\Delta x_i\coloneqq x_{i} - \hat{x}_{i}$, $\Delta u_i\coloneqq u_{i} - \hat{u}_{i}$}, we arrive at 
\begin{align}
    \hat{x}_{i+1} + \Delta x_{i+1} &= A(\hat{\rho}_i)\Delta{x}_i + B(\hat{\rho}_i)\Delta u_i \\&\phantom{=} + A(\hat{\rho}_i)\hat{x}_i + B(\hat{\rho}_i)\hat{u}_i  + \mathpzc{W}_i^\mathrm{x}.\nonumber
\end{align}
Finally, by rearranging the terms, the state propagation in differential form can be expressed as 
\begin{equation}
\label{eqn:state_prop_diff}
        \Delta x_{i+1} = A(\hat{\rho}_i)\Delta{x}_i  + B(\hat{\rho}_i)\Delta u_i  + \Delta{\mathpzc{W}}^\mathrm{x}_i,
\end{equation}
where the residual term is
\begin{equation}
\Delta{\mathpzc{W}}_i^\mathrm{x} = -\hat{x}_{i+1} + \underbrace{A(\hat{\rho}_i) (\hat{x}_i-\tilde{x}_i) + B(\hat{\rho}_i) (\hat{u}_i-\tilde{u}_i) + V_i^\mathrm{x}}_{f(\hat{x}_i,\hat{u}_i)},
\end{equation}
according to the FTC-based factorization \eqref{eqn:FTC_to_LPV}.

Finally, we outline a general unified notation that can be employed for both the SQP and the LPV-MPC methods by introducing the following standard quadratic form: 
\interdisplaylinepenalty=10000
\begin{subequations}
\label{eqn:QP_of_LPV}
    \begin{align}
    \min_{\Delta X,\Delta U} &\quad \sum_{i=0}^{N-1}\frac{1}{2}
     \begin{bmatrix}\Delta x_i \\ \Delta u_i \end{bmatrix}^\top \mathpzc{M}_i\begin{bmatrix}
        \Delta x_i\\
        \Delta u_i
    \end{bmatrix} + \left(M_i\begin{bmatrix}
        \hat{x}_i\\ \hat{u}_i
    \end{bmatrix}\right)^\top \begin{bmatrix}
        \Delta x_i\\ \Delta u_i
    \end{bmatrix} \nonumber \\
    & + \frac{1}{2} \Delta x_N^\top \mathpzc{M}_N \Delta x_N + (M_N \hat{x}_N)^\top \Delta x_N \\
        \textrm{s.t.} &  \quad \forall i\in\mathbb{I}_0^{N-1}\\
        & \quad \Delta x_{i+1}=A_i\Delta x_i+ B_i\Delta u_i + \Delta\mathpzc{W}_i^\mathrm{x},\;\label{eqn:state_prop_in_unified}  \\
                   & \quad 0\geq H^\mathrm{x}_i\Delta x_i+H_i^\mathrm{u}\Delta u_i+ \Delta\mathpzc{W}_i^\mathrm{h},\\
                   & \quad 0\geq H^\mathrm{x}_N\Delta x_N+ \Delta\mathpzc{W}_N^\mathrm{h},\\
                  & \quad \Delta x_0=0.
    \end{align}
\end{subequations}
In SQP, $\mathpzc{M}_i$ denotes the approximated Hessian of the Lagrangian corresponding to stage~$i$ (see Appendix~\ref{app:hessian}), while for the LPV-MPC, \mbox{$\mathpzc{M}_i=M_i=\mathrm{diag}(Q,R)$}, for all \mbox{$i\in\mathbb{I}_0^{N-1}$} and \mbox{$\mathpzc{M}_N=M_N=W$}, i.e., it is composed of the weighting matrices of the MPC cost. Note that by employing \emph{Gauss-Newton} (GN) approximation for SQP~\cite[Sec.~3.1]{Gros_Zanon_Quirynen_Bemporad_Diehl_2020}, we retrieve the same block diagonal matrix, as the approximation neglects the dependence of the Lagrangian on the constraints. In both cases, \mbox{$M_i = \mathrm{diag}(Q,R)$}. 
To get a better overview, all the parameters of \eqref{eqn:QP_of_LPV} are collected in \Cref{TABLE:SQP_vs_LPV}. In conclusion, it is important to emphasize that the LPV iterations use the integrated Jacobians as transition matrices to obtain the exact embedding of the nonlinear dynamics, whereas the SQP methods rely on the Jacobians obtained through linearization.

\allowdisplaybreaks

The unified formulation yields the following results.
\begin{proposition}[Equivalence of SQP and LPV-MPC]
\label{prop:sqp-lpvmpc-equivalence}
Consider the LPV-MPC formulation \eqref{eqn:q_linear_MPC} with the FTC-based LPV embedding.  
If the stage-wise anchor points are chosen as the previous solutions, i.e.,
$
\tilde{x}_i = \hat{x}_i,\; i\in\mathbb{I}_0^N,$
$\tilde{u}_i = \hat{u}_i,\; i\in\mathbb{I}_0^{N-1},
$
then the LPV-MPC iteration coincides exactly with the SQP iteration for the original nonlinear MPC problem. Consequently, at convergence, the solution is (locally) optimal.
\end{proposition}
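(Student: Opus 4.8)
The plan is to reduce the claim to a term-by-term identification of the two quadratic subproblems: I would show that, under the choice $\tilde{x}_i=\hat{x}_i$, $\tilde{u}_i=\hat{u}_i$, the LPV--MPC subproblem written in differential form, \eqref{eqn:QP_of_LPV} with the LPV data, becomes \emph{literally} the SQP subproblem \eqref{eqn:QP_of_SQP}. Once that is established, the QP minimizers $(\Delta X^\star,\Delta U^\star)$ coincide, hence so do the updates $\hat{X}\leftarrow\hat{X}+\Delta X^\star$, $\hat{U}\leftarrow\hat{U}+\Delta U^\star$; since Alg.~\ref{alg:SQP-MPC} and the differential LPV iteration are both initialized from the previous optimal trajectory, a one-line induction on the iteration counter gives that the two iterate sequences agree. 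The optimality statement at convergence then follows by applying Lemma~\ref{lem:sqp-conv} to the common limit, upgrading the merely-suboptimal guarantee of Lemma~\ref{lem:lpvmpc-suboptimal} to a KKT/local-optimality guarantee.

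The substance of the argument is the comparison of the QP data, summarized in \Cref{TABLE:SQP_vs_LPV}, and in particular the behaviour of the FTC-integrated Jacobians. In \eqref{eqn:q_linear_MPC} the scheduling is frozen at $\hat{\rho}_i=\Phi(\hat{x}_i,\hat{u}_i)$, so the integrands in \eqref{eqn:FTC_to_LPV} are sampled along the segment $\lambda\mapsto\tilde{\eta}_i+\lambda(\hat{\eta}_i-\tilde{\eta}_i)$ with $\eta_i=[x_i^\top\ u_i^\top]^\top$; choosing $\tilde{\eta}_i=\hat{\eta}_i$ degenerates this segment to the single point $\hat{\eta}_i$, each integrand is constant in $\lambda$, and therefore
\[
A(\hat{\rho}_i)=\left.\frac{\partial f}{\partial x}\right|_{\hat{x}_i,\hat{u}_i}=A_i,\qquad
B(\hat{\rho}_i)=\left.\frac{\partial f}{\partial u}\right|_{\hat{x}_i,\hat{u}_i}=B_i,
\]
exactly the SQP Jacobians \eqref{eqn:SQP_AB_jacobian}; the identical argument applied to the FTC factorization of the constraints yields $H^\mathrm{x}(\hat{\rho}_i)=H^\mathrm{x}_i$ and $H^\mathrm{u}(\hat{\rho}_i)=H^\mathrm{u}_i$. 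Next, the affine offset reduces to $V^\mathrm{x}_i=f(\tilde{x}_i,\tilde{u}_i)=f(\hat{x}_i,\hat{u}_i)$, so the residual in \eqref{eqn:state_prop_in_unified} equals $\Delta\mathpzc{W}^\mathrm{x}_i=f(\hat{x}_i,\hat{u}_i)-\hat{x}_{i+1}$, and likewise the constraint residual is $\Delta\mathpzc{W}^\mathrm{h}_i=h(\hat{x}_i,\hat{u}_i)$; these are precisely the SQP residuals. Finally, the cost data already coincide by construction, since the LPV--MPC uses $\mathpzc{M}_i=M_i=\mathrm{diag}(Q,R)$ and $\mathpzc{M}_N=M_N=W$, which is the Gauss--Newton Hessian and the gradient weight of the quadratic stage/terminal cost \eqref{eqn:quadratic_stage_and_terminal_cost} employed by SQP. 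Collecting these identities shows \eqref{eqn:QP_of_LPV} $=$ \eqref{eqn:QP_of_SQP}.

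The step I expect to be the main obstacle is reconciling the \emph{update} of the iterate rather than the subproblem. As stated in Alg.~\ref{alg:LPV-MPC}, the new state trajectory is produced by forward-simulating the nonlinear model \eqref{eqn:NL_sys} with the optimal QP input, whereas SQP applies the linearized increment $\Delta X^\star$ directly; in general these two trajectories differ, so the equivalence must be phrased at the level of the differential formulation of Sec.~\ref{sec:unified_nominal}, where the LPV iterate is likewise updated additively. One then has to argue that the two update rules share the same fixed points — at convergence $\Delta X^\star=0$, and the retained trajectory is dynamically consistent — so that on the converged iterate the distinction is immaterial and Lemma~\ref{lem:sqp-conv} applies to the common limit. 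A secondary subtlety to state explicitly is that the Hessian identification relies on the Gauss--Newton approximation (which is exact here because the cost \eqref{eqn:quadratic_stage_and_terminal_cost} is quadratic): the exact Lagrangian Hessian would carry curvature terms from the nonlinear $f$ and $h$ that the block-diagonal LPV cost weight $\mathrm{diag}(Q,R)$ does not contain.
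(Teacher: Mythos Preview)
Your proposal is correct and follows essentially the same route as the paper: both arguments collapse the FTC integrals to pointwise Jacobians by observing that the integration segment degenerates to $\hat{\eta}_i$ when $\tilde{\eta}_i=\hat{\eta}_i$, identify the resulting QP data with the SQP subproblem, and then invoke Lemma~\ref{lem:sqp-conv} for optimality at convergence. Your write-up is in fact more detailed than the paper's terse proof—in particular, the paper does not explicitly address either the update-rule discrepancy (forward simulation in Alg.~\ref{alg:LPV-MPC} versus additive increment) or the Gauss--Newton caveat you flag; it simply works within the differential reformulation of Sec.~\ref{sec:unified_nominal} and implicitly assumes the Gauss--Newton Hessian, as noted in \Cref{TABLE:SQP_vs_LPV}.
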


\begin{proof}
When the anchor points are set as $\tilde{x}_i=\hat{x}_i$ and $\tilde{u}_i=\hat{u}_i$, the LPV system and constraint matrices computed by \eqref{eqn:FTC_to_LPV} reduce to the Jacobians of \eqref{eqn:state_progagation} and \eqref{eqn:nlp_constraints}, respectively. 
Consequently, both the equality constraints and the inequality constraints in \eqref{eqn:QP_of_LPV} coincide exactly with the first-order Taylor expansions used in SQP.  
Therefore, the LPV--MPC step is equivalent to the SQP subproblem, and the standard SQP convergence results apply~(see Lemma~\ref{lem:sqp-conv}), ensuring local convergence to a KKT point of the original NLP.
\end{proof}

\begin{corollary}
\label{cor:opt_conv_lpv}
Let $X^\star,\;U^\star$ denote a KKT point of \eqref{eqn:NLP}. If 
$\tilde{X}=X^\star,\;\tilde{U}=U^\star$, then, $X^\star,\;U^\star$ is a stationary point of the LPV-MPC algorithm (Alg.~\ref{alg:LPV-MPC}).
\end{corollary}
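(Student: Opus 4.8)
The plan is to show that $(X^\star,U^\star)$ is a fixed point of one pass of Alg.~\ref{alg:LPV-MPC}: if the inner loop enters with the scheduling sequence induced by $(X^\star,U^\star)$ and the anchor points frozen at $\tilde X=X^\star$, $\tilde U=U^\star$, then the QP solve returns $U^\star$, the forward simulation in line~\ref{alg:line:sim} reproduces $X^\star$, and the refreshed scheduling sequence is unchanged, so the termination test \eqref{eqn:lpv_convergence_criteria} is met and the iterate does not move. This is precisely the notion of stationarity used in Lemma~\ref{lem:lpvmpc-suboptimal}.

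First I would record that, being a KKT point of \eqref{eqn:NLP}, $(X^\star,U^\star)$ is feasible: $x^\star_0=x[k]$, $x^\star_{i+1}=f(x^\star_i,u^\star_i)$, $h(x^\star_i,u^\star_i)\le 0$ and $h_N(x^\star_N)\le 0$. Take the current iterate forming the QP to be $\hat X=X^\star$, $\hat U=U^\star$, with $\hat\rho_i=\Phi(x^\star_i,u^\star_i)$. Since the anchor points coincide with this iterate, the integrands in the FTC factorization \eqref{eqn:FTC_to_LPV} (and its constraint analogue) are evaluated along the degenerate segment $\tilde x_i+\lambda(x^\star_i-\tilde x_i)\equiv x^\star_i$ (and likewise in $u$), hence are constant in $\lambda$ and the integrals collapse to the ordinary Jacobians: $A(\hat\rho_i)=\left.\partial f/\partial x\right|_{x^\star_i,u^\star_i}$, $B(\hat\rho_i)=\left.\partial f/\partial u\right|_{x^\star_i,u^\star_i}$, and similarly for $H^\mathrm{x}(\hat\rho_i)$, $H^\mathrm{u}(\hat\rho_i)$. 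The affine offsets reduce to $V^\mathrm{x}_i=f(\tilde x_i,\tilde u_i)=f(x^\star_i,u^\star_i)=x^\star_{i+1}$ and $V^\mathrm{h}_i=h(x^\star_i,u^\star_i)$. This is exactly the situation of Proposition~\ref{prop:sqp-lpvmpc-equivalence}: the LPV-MPC QP \eqref{eqn:q_linear_MPC} (equivalently \eqref{eqn:QP_of_LPV}) coincides with the SQP subproblem \eqref{eqn:QP_of_SQP} linearized at $(X^\star,U^\star)$, with the unmodified quadratic MPC cost \eqref{eqn:quadratic_stage_and_terminal_cost}.

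Next I would verify that $\Delta X=0$, $\Delta U=0$ is a KKT point of this QP. Substituting $(X^\star,U^\star)$ into \eqref{eqn:q_lin_state_const}--\eqref{eqn:q_linear_const2} gives $x^\star_{i+1}=0+0+x^\star_{i+1}$ and $0\ge h(x^\star_i,u^\star_i)$, $0\ge h_N(x^\star_N)$, so $\Delta=0$ is feasible with $x_0=x[k]=x^\star_0$. The stationarity and complementarity conditions of the QP at $\Delta=0$ pair the gradient of the exact MPC cost at $(X^\star,U^\star)$ with the linearized dynamics and constraints, whose Jacobians — as just shown — equal those of the NLP at $(X^\star,U^\star)$; these are therefore exactly the first-order KKT conditions of \eqref{eqn:NLP}, which hold by assumption with multipliers $(\lambda,\mu)$. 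Under the constraint qualification invoked in Lemma~\ref{lem:sqp-conv}, $\Delta=0$ is thus a KKT point of the (convex) QP, hence a minimizer, so the QP returns $U=U^\star$; forward-simulating \eqref{eqn:NL_sys} from $x^\star_0$ with $U^\star$ reproduces $X=X^\star$ by feasibility, and the recomputed scheduling is $\Phi(x^\star_i,u^\star_i)=\hat\rho_i$, i.e.\ $P=\hat P$. Hence $(X^\star,U^\star)$ is a stationary point of Alg.~\ref{alg:LPV-MPC}.

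The main obstacle I anticipate is making ``stationary point of Alg.~\ref{alg:LPV-MPC}'' precise and dealing with possible non-uniqueness of the QP solution: the LPV-MPC cost uses $\mathpzc{M}_i=\mathrm{diag}(Q,R)$ with only $Q\succeq 0$, so the QP need not be strictly convex and could admit minimizers besides $(X^\star,U^\star)$. This is why the statement is phrased as ``$X^\star,U^\star$ is \emph{a} stationary point'' (a fixed point at which the QP's KKT system is consistent with the zero step) rather than as a global-convergence claim. A secondary, more technical point is justifying the collapse of the FTC integrals to Jacobians and the identification of the QP and NLP multipliers, for which the twice-differentiability of the NLP data and the constraint qualification of Lemma~\ref{lem:sqp-conv} are exactly the hypotheses that are needed.
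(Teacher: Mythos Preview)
Your proposal is correct and follows essentially the same approach as the paper: set $\hat X=X^\star,\hat U=U^\star$, use the anchor-equals-iterate coincidence to invoke Proposition~\ref{prop:sqp-lpvmpc-equivalence} and reduce the LPV step to the SQP step, then conclude $\Delta X=0,\Delta U=0$ from the KKT assumption (the paper's version is terser and appeals directly to Lemma~\ref{lem:sqp-conv} for the last step). Your added detail on the collapse of the FTC integrals, the forward-simulation fixed point, and the non-uniqueness caveat is sound and goes beyond what the paper spells out.
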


\begin{proof}
Let $\hat{X}=X^\star,\;\hat{U}=U^\star$, i.e., the solution of the last QP corresponds to the optimal solution. Then, 
since $\hat{X} = \tilde{X}$ and $\hat{U} = \tilde{U}$, 
according to Proposition \ref{prop:sqp-lpvmpc-equivalence}, the next iterate coincides with the SQP solution. However, as the last solution corresponds to a KKT point of \eqref{eqn:NLP}, the SQP step yields $\Delta X=0,\; \Delta U = 0$, i.e., $X^\star,\;U^\star$ is a stationary point of the LPV-MPC algorithm (Alg.~\ref{alg:LPV-MPC}).
\end{proof}

\begin{table}
    \centering
    \vspace{2mm}
    \caption{Comparison of the QPs corresponding to the SQP and LPV-based NMPC solution methods.}
    \label{TABLE:SQP_vs_LPV}
    \begin{tabular}{|c|c |c|}
    \hline
        Parameter & SQP & LPV-MPC \\
        \hline\hline
        $\mathpzc{M}_i$ & $\mathrm{diag}(Q,R)$\footnotemark[2] & $\mathrm{diag}(Q,R)$ \\
        \hline
        ${M}_i$ & $\mathrm{diag}(Q,R)$\footnotemark[2] & $\mathrm{diag}(Q,R)$ \\
        \hline
         $A_i$ & $ \left.\frac{\partial f}{\partial x}\right|_{\substack{\hat{x}_i\\\hat{u}_i}} $ & $\int_0^1 \left. \frac{\partial f}{\partial x}\right|_{\substack{\tilde{x}_i+\lambda (\hat{x}_i-\tilde{x}_i) \\ \tilde{u}_i+ \lambda (\hat{u}_i-\tilde{u}_i)}}\mathrm{d}\lambda$\\
         \hline
         $B_i$ & $\left.\frac{\partial f}{\partial u}\right|_{\substack{\hat{x}_i\\\hat{u}_i}} $ & $\int_0^1 \left. \frac{\partial f}{\partial u}\right|_{\substack{\tilde{x}_i+\lambda (\hat{x}_i-\tilde{x}_i) \\ \tilde{u}_i+ \lambda (\hat{u}_i-\tilde{u}_i)}}\mathrm{d}\lambda$\\
         \hline
         $\Delta\mathpzc{W}_i^\mathrm{x}$ & $  {f}(\hat{x}_i,\hat{u}_i)-\hat{x}_{i+1}$ & $   {f}(\hat{x}_i,\hat{u}_i)-\hat{x}_{i+1}$\\
         \hline
         $H_i^{\{\mathrm{x},\mathrm{u}\}}$ & $\left.\frac{\partial {h}}{\partial \{x,u\}}\right|_{\substack{\hat{x}_i \\ \hat{u}_i}}$ & $\int_0^1 \left. \frac{\partial h}{\partial \{x,u\}}\right|_{\substack{\tilde{x}_i+\lambda (\hat{x}_i-\tilde{x}_i) \\ \tilde{u}_i+ \lambda (\hat{u}_i-\tilde{u}_i)}}\mathrm{d}\lambda$ \\
         \hline
         $\Delta\mathpzc{W}_i^\mathrm{h}$ & ${h}(\hat{x}_i,\hat{u}_i)$ & ${h}(\hat{x}_i,\hat{u}_i)$\\
         \hline
    \end{tabular}
    \vspace{-5mm}
\end{table}
\subsection{Zero-order Approximation}
\label{sec:unified_zoro}
For complex systems, it is often necessary to employ simplifications of the MPC scheme to ensure computational feasibility.
A commonly used approach is to apply a zero-order approximation, where a tailored Jacobian structure allows one component of the state to be computed independently of the remaining variables, enabling it to be propagated outside the optimization problem, while the other components still depend on it within the optimization. This method has been successfully applied for robust~\cite{Zanelli_Frey_Messerer_Diehl_2021} and stochastic \cite{Feng_Cairano_Quirynen_2020,Lahr_Zanelli_Carron_Zeilinger_2023} MPC schemes to eliminate the uncertainty description from the optimization variables. In the following, we derive the zero-order approximation for the unified MPC description of Sec.~\ref{sec:unified_nominal} using the differential formulation.

Let the states be divided as $x^\top=[y^\top\; z^\top]^\top$, where ${y}$ are the states considered as optimization variables and $z$ are the states to be propagated outside the optimization loop. Then the equality constraints corresponding to the state propagation \eqref{eqn:state_prop_in_unified} can be formulated as
\begin{equation}
    \begin{bmatrix}
        \Delta y_{i+1}\\
        \Delta z_{i+1}
    \end{bmatrix}=\begin{bmatrix}
        A^\mathrm{yy}_i & A^\mathrm{yz}_i\\
        A^\mathrm{zy}_i & A^\mathrm{zz}_i
    \end{bmatrix}\begin{bmatrix}
        \Delta y_i \\ \Delta z_i 
    \end{bmatrix}
    +\begin{bmatrix}
        B^\mathrm{y}_i\\
        B^\mathrm{z}_i
    \end{bmatrix}
    \Delta u_i + \begin{bmatrix}
        \Delta{\mathpzc{W}}_i^\mathrm{y}\\
        \Delta{\mathpzc{W}}_i^\mathrm{z}
    \end{bmatrix}.
\end{equation}
In the zero-order method, the following simplifications are made: $A^\mathrm{zy}_i=0$, $B^\mathrm{z}_i=0$. As a result, the evolution of $z$ no longer depends on the optimization variables $\Delta y, \Delta u$ and can be simplified as 
\begin{equation}
\label{eqn:z_zero_order}
    \Delta z_{i+1} = A_i^\mathrm{zz} \Delta z_i + \Delta{\mathpzc{W}}^\mathrm{z}_i,
\end{equation}
where \mbox{$\Delta{\mathpzc{W}}_i^\mathrm{z}=-\hat{z}_{i+1} + f^\mathrm{z}(\hat{y}_i, \hat{z}_i, \hat{u}_i)$} corresponds to the $z$-component of the original dynamics~\eqref{eqn:NL_sys}. As outlined in~\cite{Feng_Cairano_Quirynen_2020}, there are multiple approaches to propagate the dynamics of $z$ in between solver iterations. First, noticing that \mbox{$\Delta z_0=0$}, \eqref{eqn:z_zero_order} can be rolled out to obtain the sequence of auxilary variables. Second,~\cite{Feng_Cairano_Quirynen_2020} also suggests the propagation of $z$ based on the original nonlinear dynamics, i.e.,
\begin{equation}
\label{eqn:z_prop}
    z_{i+1}=f(\hat{y}_i,z_i, \hat{u}_i).
\end{equation}
Note that if \eqref{eqn:z_prop} is linear in the auxiliary variable $z$, the two methods produce identical results. Consequently, since most iterative \mbox{LPV-MPC} approaches addressing uncertainty (e.g.~\cite{Polcz_Peni_Toth_2023}) use this form of auxiliary propagation, they can be interpreted as a zero-order approximation known from the SQP scheme. 
The proposed unified framework thereby allows to identify the correspondence and shows that the approximations introduced
by i) the zero-order method in SQP and ii) the iterative \mbox{LPV-MPC} formulation that uses auxiliary state propagation~\eqref{eqn:z_prop} and an extended scheduling variable $\rho= [y^\top\;u^\top\;z^\top]^\top$ are equivalent.

\footnotetext[2]{Assuming GN Hessian approximation.}
\footnotetext[3]{\url{https://gitlab.ethz.ch/ics/sqp_lpv_mpc}}
\section{Simulation Study}
\label{sec:simulation_study}
Next, we compare the computational complexity and convergence properties of the SQP and the FTC-based LPV-MPC algorithm through the proposed unified form, where each method results as a particular choice of the involved terms. For this, we have implemented both algorithms in \textsc{acados}~\cite{Verschueren_2022} using the \textsc{l4acados} package~\cite{Lahr_Naf_Wabersich_Frey_Siehl_Carron_Diehl_Zeilinger_2024}. The source code is available online\footnotemark[3]. All simulations are carried out using an M2 MacBook Air with 16 GB RAM.

First, we analyze the convergence properties of both algorithms using  a simplified nonlinear example; then, we employ them with RTI for the control of an autonomous race car.

\subsection{Convergence Properties and Computational Complexity}
First, we utilize the cart-pendulum system, see, e.g.,~\cite[Eq.~ (23)-(24)]{Guemghar_Srinivasan_Mullhaupt_Bonvin_2002},
where $p$, $\dot{p}$ are the position and velocity of the cart, and $\phi$, $\dot{\phi}$ are the angle and angular velocity of the pendulum, jointly defining the state vector \mbox{$x=[p\; \dot{p}\; \phi\; \dot{\phi}]^\top$}. We aim to steer the system to the equilibrium $x^\mathrm{eq}=0$ from a downward initial position $x[0]=[0\;0\;-\pi\; 0]^\top$. We consider box state and input constraints in the form \mbox{$x_i \in [-5,5]\times[-5,5]\times[-2\pi,2\pi]\times[-10,10]$, $u_i \in [-4,4]$}. To discretize the cart-pendulum system, we utilize a \emph{fourth-order Runge-Kutta} (RK4) numerical integration method with $T_\mathrm{s}=0.01$s sampling time. The prediction horizon is $N=20$. Furthermore, for the LPV-MPC, we use a rectangular numerical integration scheme with $n_\mathrm{int}=20$ stages to compute the integral of the FTC-based embedding~\eqref{eqn:FTC_to_LPV}. To formulate the MPC cost \eqref{eqn:NL_MPC_cost}, we use \mbox{$Q=\mathrm{diag}(100,1,100,1)$}, $R=10$.

We compare five different LPV-MPC algorithms with the SQP scheme: (1) constant anchor points at the origin $\tilde{x}_i=0$, $\tilde{u}_i=0$; (2) constant non-zero\footnotemark[4] anchor points $\tilde{x}_i=c^\mathrm{x}$, $\tilde{u}_i=c^\mathrm{u}$; (3) last input and measured state as anchor points \mbox{$\tilde{x}_i=x[k]$, $\tilde{u}_i=u[k-1]$}\footnotemark[5], with $u[-1] \doteq 0$; (4) last optimizer sequence as anchor points $\tilde{x}_i=\hat{x}_i$, $\tilde{u}_i=\hat{u}_i$; (5) the idealized setting of optimal state and input sequence as anchor points $\tilde{x}_i=x^\star_i$, $\tilde{u}_i=u^\star_i$. For a fair comparison, both algorithms use the same initialization $\hat{x}_i=x[0], \hat{u}_i=0$.
\footnotetext[4]{The $c^\mathrm{x}, c^\mathrm{u}$ values are picked randomly from the feasible set, then kept constant during the simulations.}
\footnotetext[5]{Note that this is different from gain-scheduled MPC, where the anchor points are 0 and the scheduling trajectory is set to be constant and equal to the previous state- and input-induced values.}

In Fig. \ref{fig:residuals}, we evaluate the NLP residuals of the original NMPC problem at the first time step by performing a fixed number of 10 iterations. As shown, the LPV-MPC algorithm generally converges to a suboptimal solution of the original NLP, according to Lemma~\ref{lem:lpvmpc-suboptimal}. When the last optimizer sequence is used as the anchor sequence, we retain the SQP algorithm and its convergence properties, verifying Proposition~\ref{prop:sqp-lpvmpc-equivalence}. Furthermore, if a KKT point is used as anchor points, the LPV-MPC maintains this stationary point (Corollary~\ref{cor:opt_conv_lpv}). 

\begin{figure}
\vspace{1mm}
\centering\includegraphics[width=\linewidth]{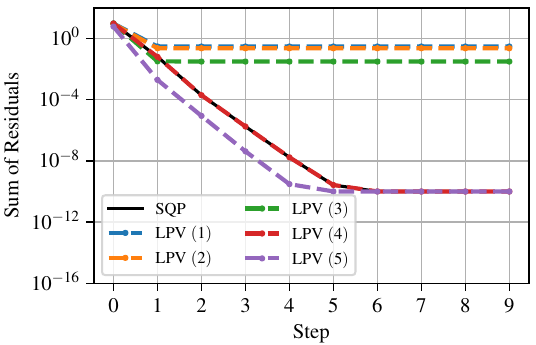}
    \vspace{-3mm}
    \caption{KKT residuals of the SQP and LPV-MPC iterations for a single OCP for the cart-pendulum system.}
    \label{fig:residuals}
    \vspace{-4mm}
\end{figure}

In Fig. \ref{fig:num_iters}, the number of solver iterations required to converge is shown along an 80-step rollout of the closed-loop system. For a fair comparison, both algorithms use the same LPV-MPC termination criteria~\eqref{eqn:lpv_convergence_criteria}, with $\epsilon=10^{-6}$. Note that this differs from the usual SQP termination criterion based on KKT residuals. As shown in Fig.~\ref{fig:residuals}, both methods can yield solutions with a varying degree of optimality and iteration number.

Furthermore, \Cref{tab:pend_sim_times} details the computational costs associated with the preparation (construction of the QP) and the feedback (solving the QP) phases per iteration, averaged over the whole rollout. For this example, LPV-MPC approaches require fewer iterations to converge at the expense of suboptimality. However, while the SQP algorithm generally needs more iterations to converge, computing the Jacobian is cheaper than evaluating the integral \eqref{eqn:FTC_to_LPV}, keeping the total solution time comparable. Still, for large-dimensional systems where the reduction in QP iterations dominates the additional cost of the integration scheme~\eqref{eqn:FTC_to_LPV}, the LPV-MPC algorithm can be advantageous, especially since the integration can be easily parallelized and further tuned through advanced quadrature schemes or fewer integration stages.
\begin{figure}
    \centering
    \includegraphics[width=\linewidth]{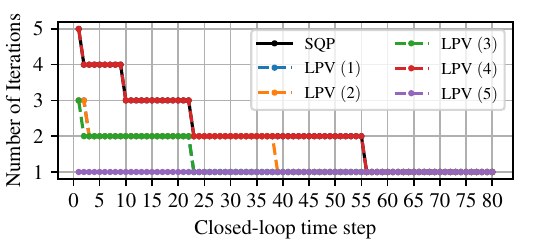}
    \vspace{-5mm}
    \caption{Number of iterations required to converge at each closed-loop step for the cart-pendulum system. Note that LPV (4) overlays SQP, verifying Proposition~\ref{prop:sqp-lpvmpc-equivalence}.}
    \label{fig:num_iters}
    \vspace{-3mm}
\end{figure}

\begin{table}
\vspace{2mm}
    \centering
    \caption{Computational times and number of iterations required to converge, averaged through the rollout for the cart-pendulum simulation. Preparation time $T_\mathrm{prep}$, feedback time $T_\mathrm{fb}$ are given in milliseconds, while $n_\mathrm{it}$ denotes the number of iterations.}
    \label{tab:pend_sim_times}
    \begin{tabular}{|c|c||c|c|c|}
    \hline
        Method & $(\tilde{x}_i.\;\tilde{u}_i)$ & $T_\mathrm{prep}$  & $T_\mathrm{fb}$ & $n_\mathrm{it}$\\
        \hline
        \hline
        SQP & -- & 0.53 & 0.46  & 2.09\\
        \hline
        LPV (1) &($0,\;0$) & 0.78 & 0.46  & 1.3\\
        \hline
        LPV (2)&($c^\mathrm{x},\; c^\mathrm{u}$) & 0.79 & 0.46  & 1.5\\
        \hline
        LPV (3)&($x[k],\; u[k-1]$) & 0.76 & 0.45 & 1.28\\
        \hline
        LPV (4)&($\hat{x}_i,\; \hat{u}_i$) & 0.75 & 0.46  & 2.09
        \\
        \hline
        LPV (5)& (${x}^\star_i,\; {u}^\star_i$) & 0.75 & 0.45  & 1\\
        \hline
    \end{tabular}
    \vspace{-5mm}
    \end{table}

\subsection{Autonomous Racing}
This section applies the LPV-MPC algorithm for autonomous racing with MPCC~\cite{Hewing_Kabzan_Zeilinger_2020, Lahr_Naf_Wabersich_Frey_Siehl_Carron_Diehl_Zeilinger_2024}. We first outline the \emph{autonomous ground vehicle}~(AGV) model and the resulting LPV-MPC formulation.
\subsubsection{Vehicle Model}
\label{sec:vehicle_model}
We use a dynamic single-track model~\cite[Eq.~(5)]{Carron_Bodmer_Vogel_Zurbrugg_Helm_Rickenbach_Muntwiler_Sieber_Zeilinger_2023} to describe the motion dynamics, where the state vector 
$x=[p_\mathrm{x}\; p_\mathrm{y}\; \psi\; v_\mathrm{x}\; v_\mathrm{y}\; \omega\; T\; \delta\; \theta]^\top$
comprises the 2D position of the vehicle $(p_\mathrm{x}, p_\mathrm{y})$, heading angle $\psi$ with respect to the global $x$-axis, longitudinal and lateral velocities $(v_\mathrm{x}, v_\mathrm{y})$, and yaw rate $\omega$ in the body-fixed frame. Additionally $T$ is the applied motor torque and $\delta$ is the steering angle. Lastly, $\theta$ is the progress along the track. Overall the model is obtained by combining single track dynamics, a Pacejka tire model and integrators for the torque and steering dynamics (modeling the low-level torque and steering controllers) and the progress variable. Consequently, the control input is $u = [\dot{ T}\;  \dot{\delta}\;\dot{\theta}]^\top$. To obtain the DT dynamics,
we discretize the model by RK4 to obtain 
\begin{equation}\label{eqn:MPCC_state_prop}
x[k+1]=f_\mathrm{C}(x[k],u[k]).
\end{equation}
\subsubsection{LPV-MPCC formulation}
\label{sec:LPV_MPCC}
The key idea of the MPCC algorithm is to maximize the progress along a predefined reference path while minimizing the deviation from it and respecting the constraints imposed by the boundaries of the reference track. 
To embed the MPCC formulation into the LPV-MPC framework, we define the output equation and the track constraints as
\begin{align}
    \mathpzc{y}_i &= c({x}_i), \label{eqn:NL_MPCC_output}\\
    0&\geq h(x_i, u_i),\label{eqn:NL_MPCC_const}
   \end{align}
where \mbox{$\mathpzc{y}_i=[e_\mathrm{l}\; e_\mathrm{c}\; \theta\; 1]^\top$} contains the contouring and lag errors~\cite[Sec.~IV.B]{Hewing_Liniger_Zeilinger_2018}.
Then, using the FTC-based embedding for \eqref{eqn:MPCC_state_prop}--\eqref{eqn:NL_MPCC_const} with scheduling variable \mbox{$\rho_i=[{x}_i^\top\; {u}_i^\top]^\top$}, the OCP of the LPV-MPCC algorithm can be expressed as
\begin{subequations}
\begin{align}
    \min_{{X},{U}} & \quad \quad\sum_{i=0}^N \norm{\mathpzc{y}_i}_{\tilde{Q}}^2 + \norm{{u}_i^\top}_{\tilde{R}}^2  \\
    \textrm{s.t.} & \quad \forall i\in\mathbb{I}_0^{N-1}\\
     &\quad {x}_{i+1}=A(\rho_i){x}_i +B(\rho_i){u}_i + \mathpzc{W}^\mathrm{x}_i,\\
     & \quad \mathpzc{y_i} = C(\rho_i){x}_i + \mathpzc{W}_i^\mathrm{y},\\
     & \quad H^\mathrm{x}(\rho_i){x}_i + H^\mathrm{u}(\rho_i){u}_i + \mathpzc{W}^\mathrm{h}_i \leq 0,\\
     &\quad {x}_0={x}[k],
\end{align}
\end{subequations}
where $\tilde{Q}$ is the weighting matrix of the contouring and lag error and $\tilde{R}$ is the input weighting matrix as outlined in \cite{Lahr_Naf_Wabersich_Frey_Siehl_Carron_Diehl_Zeilinger_2024}.

\subsubsection{Simulation Results}
In the following simulations, we compare how the LPV-MPC and the SQP algorithms perform in an RTI scheme. The simulations are performed using the simulator module of CRS~\cite{Carron_Bodmer_Vogel_Zurbrugg_Helm_Rickenbach_Muntwiler_Sieber_Zeilinger_2023}, which uses the dynamic model of a 1/28 scale autonomous electrical car.

During the simulation experiments, we execute multiple laps around a test track with the controller and compare the average KKT residuals and the residual reduction after each iteration, which is computed as the ratio between the residuals before and after ($\alpha_{\mathrm{r}, \mathrm{avg}} = \sum_0^{N_\mathrm{sim}}\frac{r_{k}}{r_{k+1}}/N_\mathrm{sim}$) each QP solution step. Furthermore, we compare the average preparation and feedback times of the RTI iterations. Note that, since we do not have access to the optimal solution, we omit variant (5) from this study. 

As shown in \Cref{tab:MPCC_sim_results}, the SQP method achieves shorter preparation times than the iterative LPV-MPC, because the Jacobians are evaluated only once per step along the prediction horizon, whereas the LPV-MPC requires multiple evaluations for numerical integration. Given that the resulting QPs have similar structures, the feedback times are comparable. However, LPV-MPC generally exhibits a larger average reduction in NLP residuals and a smaller average residual value. This indicates that the LPV-MPC algorithm may tend to operate closer to optimality in the RTI framework despite its suboptimality at convergence, due to a more effective global embedding. Lastly, note that with a suitable selection of anchor points (4), the SQP and LPV-MPC iterations become equivalent.

\begin{table}
    \centering
    \vspace{2mm}
    \caption{Comparison of the NMPC schemes for autonomous racing simulations. $T_\mathrm{prep}$ and $T_\mathrm{fb}$ are in ms.}
    \label{tab:MPCC_sim_results}
    \begin{tabular}{|c|c||c|c|c|c|}
    \hline
        Alg. & ($\tilde{x}_i,\; \tilde{u}_i$) & $T_\mathrm{prep}$  & $T_\mathrm{fb}$ & $\alpha_{\mathrm{r},\mathrm{avg}}$ & $r_\mathrm{avg}$\\
        \hline
        \hline
        SQP & -- & 13.7 & 9.21 & 1.30 & 3.45\\
        \hline
        LPV (1) &($0,\; 0$) & 17.6 & 9.4 & 1.54 & 3.23\\
        \hline
        LPV (2)& ($c^\mathrm{x},\; c^\mathrm{u}$) & 17.6 & 9.3 & 1.64 & 3.33\\
        \hline
        LPV (3)&($x[k],\;u[k-1]$)& 17.6 & 9.4 & 1.67 & 3.19\\
        \hline
        LPV (4)&($\hat{x}_i,\; \hat{u}_i$) & 17.6 & 9.4 & 1.30 & 3.45 \\
        \hline
    \end{tabular}
    \vspace{-3mm}
\end{table}
\section{Experiments}
\label{sec:experiment}
We perform real-world experiments applying learning-based MPCC on the small-scale vehicle platform, allowing us to evaluate the zero-order approximation scheme\footnotemark[6]. The setup is based on CRS~\cite{Carron_Bodmer_Vogel_Zurbrugg_Helm_Rickenbach_Muntwiler_Sieber_Zeilinger_2023}, which employs custom 1/28-scale electric cars and a Qualisys motion capture system. As in simulation, the controller is formulated as an MPCC (Sect.~\ref{sec:LPV_MPCC}), but in experiments, we augment the nominal model with a GP to learn the residual dynamics inherently present when working with real hardware. Then, we utilize the stochastic GP-MPC scheme of \cite{Lahr_Zanelli_Carron_Zeilinger_2023}.
\footnotetext[6]{Experimental data available at \doi{10.3929/ethz-c-000797782}.}

Formally, we consider $x[k+1]=f_\mathrm{C}(x[k],u[k])+B_\mathrm{g}g(x[k],u[k])+w[k]$, where $w[k]$ is the process noise, $g: \mathbb{R}^{n_\mathrm{x}}\times\mathbb{R}^{n_\mathrm{u}}\rightarrow \mathbb{R}^{n_\mathrm{g}}$ is the unknown residual dynamics and $B_\mathrm{g}\in\mathbb{R}^{n_\mathrm{x}\times n_\mathrm{g}}$ is a full column rank matrix, characterizing that $g$ only affects a subspace of the full state space. As most significant modeling errors usually appear in the tire and drivetrain parameter estimates~\cite{Lahr_Naf_Wabersich_Frey_Siehl_Carron_Diehl_Zeilinger_2024,Floch_Peni_Toth_2024}, we define 
\mbox{$B_\mathrm{g}\doteq [0_{3\times 3}\; I_{3\times 3 }\; 0_{3 \times 3}]^\top$}
and estimate $g$ with GPs, i.e.,
$        g\sim \mathcal{GP}(\mu_\mathrm{g},\Sigma_\mathrm{g}),$
where $g: \mathbb{R}^{n_\mathrm{x}}\times\mathbb{R}^{n_\mathrm{u}}\rightarrow \mathbb{R}^{n_\mathrm{g}}$ is the posterior mean and \mbox{$\Sigma_\mathrm{g}: \mathbb{R}^{n_\mathrm{x}}\times\mathbb{R}^{n_\mathrm{u}}\rightarrow \mathbb{R}^{n_\mathrm{g} \times n_\mathrm{g}}$} is the posterior variance.
As the computational demand of the naive GP-MPC scales quadratically with the number of system states, we utilize the zero-order approximation (Sec.~\ref{sec:unified_zoro}) for the propagation of covariances. The formulation and implementation of the GP-MPC 
are based on~\cite{Lahr_Zanelli_Carron_Zeilinger_2023, Lahr_Naf_Wabersich_Frey_Siehl_Carron_Diehl_Zeilinger_2024}.

The GP implementation is based on \textsc{GPytorch} and uses $D=50$ datapoints collected and updated online according to~\cite[Sec.~IV.D.2]{Lahr_Naf_Wabersich_Frey_Siehl_Carron_Diehl_Zeilinger_2024}. 
The controller is run in RTI-mode at 30 Hz, with $N=40$ prediction horizon. To compute the integrals \eqref{eqn:FTC_to_LPV}, we use a Gauss-Legendre scheme with $n^\mathrm{int}_\mathrm{nom}=40$ FTC integration stages for the nominal model and $n_\mathrm{GP}^\mathrm{int}=20$ for the GPs. We compare four schemes: nominal (1) SQP and (2) LPV using the measured state and the last input as anchor points, (3) zero-order GP-SQP and (4) zero-order \mbox{GP-LPV}.

As shown in Fig.~\ref{fig:hw_experiments} and Table~\ref{Table:hw_experiments}, the nominal controllers guide the car around the track but fail to follow the optimal raceline. During testing, manually tightened track constraints ($d_\mathrm{t}=0.02$~m) are needed in this case to prevent collisions, whereas the GP-MPCC schemes operated safely without such adjustments, as indicated by the Safe column in Table~\ref{Table:hw_experiments}. In terms of computation, SQP is faster, as it only evaluates Jacobians~\eqref{eqn:SQP_AB_jacobian} $N$ times, while the LPV method requires $N(n^\mathrm{int}_\mathrm{nom}+n_\mathrm{GP}^\mathrm{int})$ evaluations for numerical integration. Although parallelization mitigates this, SQP remains more efficient in RTI schemes. In the learning-based setting, LPV-MPC achieves a lower average cost through improved model approximations, whereas the nominal case incurs higher costs due to a more significant model mismatch.

\begin{figure}
    \centering
    \includegraphics[width=\linewidth]{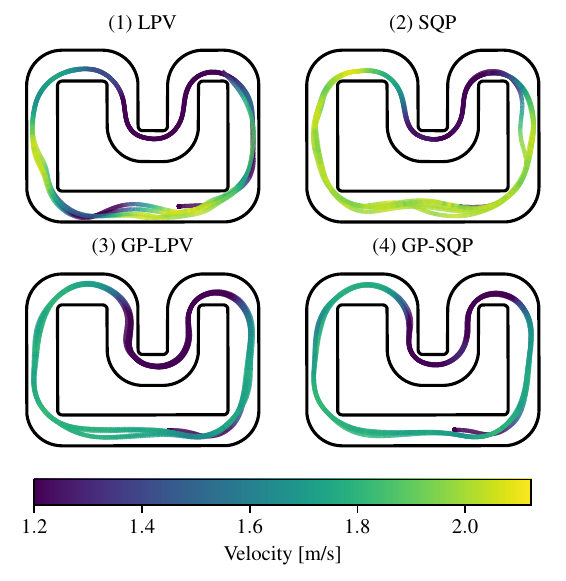}
    \vspace{-8mm}
    \caption{Miniature car racing hardware experiments with the MPCC implementations. A video of the experiments is available at \url{https://youtu.be/-1toeTJSsgg}.}
\label{fig:hw_experiments}
    \vspace{-6mm}
\end{figure}

\begin{table}
\centering
\vspace{2mm}
\caption{Comparison of the MPCC implementations in real hardware experiments. Times are in milliseconds.}
\label{Table:hw_experiments}
\begin{tabular}{|l||c|c|c|c|c|}
\hline
Alg. & $T_\mathrm{sol}$ & $T_\mathrm{prep}$ & $T_\mathrm{fb}$ & Cost & Safe\\
\hline
\hline
LPV & 20.54 & 17.60 & 2.93 & 5.53 & \\
\hline
SQP & 5.13 & 2.17 & 2.95 & 4.37 & \\
\hline

GP-LPV & 31.31 & 25.83 & 5.45 & 6.88 & \checkmark \\
\hline
GP-SQP & 23.72 & 18.63 & 5.07 & 7.46 & \checkmark \\
\hline
\end{tabular}
\vspace{-3mm}
\end{table}
\section{Conclusion}
\label{sec:conclusion}
This paper presented a unified NMPC solution framework that integrates SQP and LPV-MPC as specific subcases. We showed that by the appropriate choice of the sensitivity matrices, both algorithms can be implemented within a common framework, for which we provide an open-source implementation. In particular, we demonstrated that the FTC embedding for LPV-MPC recovers SQP under a specific choice of anchor points. Furthermore, we integrated the zero-order Jacobian approximation into the unified framework and showed its connection to LPV scheduling variables. Finally, in simulations, we highlighted their convergence properties and computational complexity and deployed the algorithms in real-world autonomous racing experiments. 

%

\bibliographystyle{IEEEtran}
\bibliography{IEEEabrv,references}
\appendix
\subsection{Hessian Approximations in SQP}
\label{app:hessian}
The Lagrangian of the NMPC~\eqref{eqn:NLP} can be expressed as
\begin{multline}
    \mathcal{L}(X,U,\Theta, \mathcal{N})= \sum_{i=0}^{N-1}\Big(l_i(x_i,u_i)+\vartheta_{i+1}^\top(f(x_i,u_i)-x_{i+1}) \\+ \mu_i^\top h(x_i,u_i)\Big)+l_N(x_N) + \mu_N^\top h(x_N, u_N) + \vartheta_0(x_0-x[k]),
\end{multline}
where $\Theta = [\vartheta_0^\top \dots \vartheta_{N}^\top]^\top$ and $\mathcal{N}=[\mu_0^\top\dots\mu_N^\top]^\top$  are the Lagrange multipliers, respectively. Using the exact Hessian of the Lagrangian
\begin{align}
\mathpzc{M}_0 & =  \nabla_{(x_0,u_0)}^2\mathcal{L} = \nabla^2 l_0 + \vartheta_0^\top \mathrm{diag(I_{n_\mathrm{x}\times n_\mathrm{x}}, 0_{n_\mathrm{u}\times n_\mathrm{u}})}\nonumber  &\phantom{=} \nonumber\\ & \phantom{=}+ \mu_0^\top \nabla^2h(x_0,u_0),\\
\mathpzc{M}_i & = \nabla_{(x_i,u_i)}^2\mathcal{L} = \nabla^2 l_i + \vartheta_i^\top\nabla^2 f(x_i,u_i)\nonumber \\ &\phantom{=} + \mu_i^\top \nabla^2h(x_i,u_i), \quad \forall i\in\mathbb{I}_1^{N-1}, \\
\mathpzc{M}_N &= \nabla_{(x_N,u_N)}^2\mathcal{L} = \nabla^2 l_N + \mu_N^\top \nabla^2h(x_N,u_N).
\end{align}
Under the GN approximation the Hessian of the Lagrangian for quadratic cost~\eqref{eqn:quadratic_stage_and_terminal_cost} is \mbox{$ \nabla^2_{(x_i,u_i)}\mathcal{L}\approx \nabla^2_{(x_i,u_i)}l_i(x_i,u_i),$}
i.e., the constraint curvature terms are neglected. Therefore,
\begin{align} \mathpzc{M}_i &=  \nabla_{(x_i,u_i)}^2\mathcal{L} = \mathrm{diag} (Q, R), \quad i\in\mathbb{I}_0^{N-1}\\ \mathpzc{M}_N & =\nabla_{(x_N,u_N)}^2\mathcal{L}= W.
\end{align}
\end{document}